\newcommand*{\doi}[1]{doi: \href{https://dx.doi.org/#1}{\urlstyle{rm}\nolinkurl{#1}}}
\newcommand*{\arxiv}[1]{arXiv:  \href{https://arxiv.org/abs/#1}{\urlstyle{rm}\nolinkurl{#1}}}
\let\oldproofname=\proofname
\renewcommand{\proofname}{\rm\bf{\oldproofname}}
\newcommand{\RR}{\mathbb R}
\newcommand{\NN}{\mathbb N}
\newcommand{\ZZ}{\mathbb Z}
\newcommand{\bna}{\begin{eqnarray}}
\newcommand{\ena}{\end{eqnarray}}
\newcommand{\ba}{\begin{eqnarray*}}
\newcommand{\ea}{\end{eqnarray*}}
\newcommand{\bs}[1]{}
\newcommand{\f}{{\mathbf f}}
\newtheorem{theorem}{Theorem}[section]
\newtheorem{lemma}[theorem]{Lemma}
\newtheorem{remark}[theorem]{Remark}
\newcommand{\eps}{\varepsilon}
\def\e{{\bf e}}
\def\0{{\bf 0}}
\def\b{{\bf b}}
\let\oldv=\v
\def\v{{\bf v}}
\def\e{{\bf e}}
\def\x{{\bf x}}
\begin{document}
\title{Lattices Without a Big Constant and With Noise}

\author{
Steven J. Gortler and 
Louis Theran
}
\date{}
\maketitle 

\begin{abstract}
We show how Frieze's analysis of subset sum solving using lattices can
be done with out any large constants 
and without flipping.  We apply the variant without the large constant 
to inputs with noise.
\end{abstract}

\section{Introduction}
In~\cite{LO}, 
Lagarias and Odlyzko 
introduce a lattice based method for 
efficiently solving subset sum problems
with large integers, with high probability.
In~\cite{frieze}, Frieze applies two slight alterations to the method of 
Lagarias and Odlyzko, and
is then able to provide a very simple proof of the high probability correctness
of  the altered method. The first alteration, as we describe below is the 
introduction of a large constant in their lattice construction. The
second introduction is testing a certain condition on the input and 
if this condition fails, instead solving a suitably ``flipped'' problem.

In this note we first show that one can avoid the use of this large constant without
sacrificing the result or making the proof (much) more complicated. We next show
that if we alter the problem by adding one extra row to the lattice construction,
we can avoid the test-and-flip step.

Our motivation for removing the large constant comes from our desire to 
deal with slightly noisy input. In \cite{gam,gam2}, the authors further altered 
Frieze's lattice contstruction to deal with noisy input. In this note
we also show, that once the large constant is removed, small noise can
be dealt with without any alterations at all.

The ideas in this note have been applied in~\cite{CGLline} where we use
a lattice based approach to reconstruct a one-dimensional point configuration
from an unlabled subset of the interpoint distances.

This text will be borrow quite heavily from Frieze's language verbatim 
throughout, without further specific attribution, 
and we will
assume the reader is quite familiar with that paper.

\section{Basic Lattice Construction}

Let $\e= [e_1; e_2;\ldots ;e_n]\in\{0,1\}^n$ be fixed.
Let $B_1, B_2,\ldots B_n$ be postive integers and
$B_0=\sum_{i=1}^nB_ie_i$. 
We will assume that $B_1, B_2,\ldots B_n$ are
independently chosen at random from $1,\ldots,B=2^{cn^2}$
with $c=1/2+\epsilon$ for $\epsilon>0$.
The given SUBSET-SUM problem is to find $\e$ given 
$B_0, B_1, ...B_n$.

Adapting the method of~\cite{LO}, Frieze 
assumes that
\bna 
\label{eq:ass}
B_0 \ge \frac{1}{2}\sum_{i=1}^n B_i
\ena 
If this doesn't hold, he instead replaces $B_0$ by
\ba 
    \left(\sum_{i=1}^n B_i\right) - B_0
\ea 
Plainly, this new problem has a solution iff the original does, and one 
can easily determine $\e$ from the solution to the new problem.
We call this step ``test-and-flip''.

Adapting the method of~\cite{LO}, Frieze then defines 
the integer 
lattice that is generated by the 
($n+1$)-by($n+1$) matrix:
\bna
\label{eq:original}
\begin{pmatrix}
pB_0 & -pB_1 & -pB_2 &... & -pB_{n-1} &-pB_n\\
0 & 1 & 0& ... & 0 &0\\
0 & 0 & 1& ... & 0 &0\\
... & ... & ...&... & ... &...\\
0 & 0 & 0&... & 1 &0\\
0 & 0 & 0& ... & 0 &1\\
\end{pmatrix}
\ena
where the integer $p>n2^{n/2}$ plays the role of a ``large constant''.

The LLL algorithm~\cite{LLL}, (an efficient algorithm) 
is then run to find a small
vector in this lattice.

We will state the main result of Frieze in the following form:
\begin{theorem}\label{thm:main}
For all $\epsilon > 0$, there is an $n_0(\epsilon) \in \NN$,
such that if $n>n_0$,
the algorithm returns a vector that is a scale factor of
$\e$
with probability at least $1 - 2^{-\epsilon n^2/2}$.
\end{theorem}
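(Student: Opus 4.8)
The plan is to run Frieze's argument: use the LLL approximation guarantee to force the returned vector to have a vanishing ``$p$-coordinate,'' and then a union bound over short integer vectors to rule out every short lattice vector other than the scalar multiples of $\e$. Throughout we may assume $\e\neq\0$ (otherwise $B_0=0$ and, after test-and-flip, the instance is the all-ones one), so $B_0\ge 1$, and we may assume the normalization~\eqref{eq:ass} holds, i.e.\ $\sum_{i=1}^n B_i\le 2B_0$.

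First I would record the structure of the lattice: its vectors are exactly $\b=\bigl(p(z_0B_0-\sum_{i=1}^n z_iB_i),\,z_1,\dots,z_n\bigr)$ for $(z_0,z_1,\dots,z_n)\in\ZZ^{n+1}$. In particular the intended solution $(0,e_1,\dots,e_n)$ lies in it (take $z_0=1$, $z_i=e_i$; the first coordinate is $p(B_0-\sum_i e_iB_i)=0$ by definition of $B_0$), and it has Euclidean norm $\sqrt{\|\e\|_1}\le\sqrt n$, so the shortest nonzero lattice vector has length $\lambda_1\le\sqrt n$. Hence LLL, run on this $(n+1)$-dimensional lattice, returns a nonzero vector $\b$ with $\|\b\|\le 2^{n/2}\lambda_1\le\sqrt n\,2^{n/2}<p$, using $p>n2^{n/2}$. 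Since the first coordinate of $\b$ is an integer multiple of $p$ and $\|\b\|<p$, that coordinate is $0$; thus $z_0B_0=\sum_i z_iB_i$, equivalently $\sum_i w_iB_i=0$ where $w_i:=z_i-z_0e_i$.

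Now there are two cases. If $\mathbf w=\0$ then $\b=z_0\,(0,e_1,\dots,e_n)$ with $z_0\neq0$, i.e.\ $\b$ is a nonzero scale factor of $\e$ and the algorithm succeeds. Otherwise $\mathbf w\in\ZZ^n\setminus\{\0\}$ satisfies $\sum_i w_iB_i=0$, and I bound $\|\mathbf w\|_\infty$: from $\|\b\|\le\sqrt n\,2^{n/2}$ we get $\|(z_1,\dots,z_n)\|_\infty\le\sqrt n\,2^{n/2}$, and from $|z_0|B_0=|\sum_i z_iB_i|\le\|(z_1,\dots,z_n)\|_\infty\sum_i B_i\le 2\|(z_1,\dots,z_n)\|_\infty B_0$ we get $|z_0|\le 2\sqrt n\,2^{n/2}$; hence $\|\mathbf w\|_\infty\le K:=3\sqrt n\,2^{n/2}$. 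For a fixed nonzero $\mathbf w$ with $\|\mathbf w\|_\infty\le K$, picking an index $j$ with $w_j\neq0$ and conditioning on $(B_i)_{i\neq j}$, the equation $w_jB_j=-\sum_{i\neq j}w_iB_i$ has at most one solution $B_j\in\{1,\dots,B\}$, so $\Pr[\sum_i w_iB_i=0]\le 1/B=2^{-cn^2}$. A union bound over the at most $(2K+1)^n\le 2^{\,n^2/2+O(n\log n)}$ admissible $\mathbf w$ gives a total failure probability at most $2^{\,n^2/2+O(n\log n)}\cdot 2^{-cn^2}=2^{-\epsilon n^2+O(n\log n)}$, which is at most $2^{-\epsilon n^2/2}$ once $n\ge n_0(\epsilon)$; this last estimate is precisely where the hypothesis $c=1/2+\epsilon$ enters.

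The main obstacle is the bound $\|\mathbf w\|_\infty\le K$ in the third paragraph, and in particular controlling the multiplier $z_0$, so that the set of ``bad'' vectors in the union bound has size $2^{\,n^2/2+o(n^2)}$ rather than something growing with $\log B$, which would swamp the $2^{-cn^2}$ per-vector probability. This is exactly the role of the normalization~\eqref{eq:ass} (and, in the variant of the next section, of the extra lattice row that will make test-and-flip unnecessary). Everything else — the LLL bound, the conditioning argument, and the final arithmetic — is routine bookkeeping.
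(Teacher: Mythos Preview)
Your argument is correct and is essentially Frieze's original proof: you keep the large constant $p$, use the LLL bound $\|\b\|\le 2^{n/2}\sqrt{n}<p$ to force the first coordinate to vanish, obtain the \emph{homogeneous} relation $\sum_i w_iB_i=0$, and finish with the conditioning/union-bound count. All the estimates (including the crucial $|z_0|\le 2\sqrt n\,2^{n/2}$ via~\eqref{eq:ass}) are right.

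The paper, however, proves Theorem~\ref{thm:main} via the route of Section~3: it drops $p$ from the lattice, so the first coordinate $x_0$ of the LLL output need not vanish. This produces an \emph{inhomogeneous} equation $\sum_i B_i z_i=-x_0$, which the paper handles by the affine-subspace count of Lemma~\ref{lem:aff} rather than your conditioning argument (it explicitly remarks that Frieze's probabilistic step, which you use, would also work). The bookkeeping differs slightly (the bound on the multiplier becomes $|x'_0|\le 3m$, and the union bound is over pairs $(\x,y)\in\ZZ^{n+1}\times\ZZ$ giving $(6m+1)(2m+1)^{n+1}$), but everything is absorbed into the same $2^{n^2/2+O(n\log n)}$ term. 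What the paper's route buys is precisely the generalization that motivates the paper: once $p$ is gone, the same argument survives the addition of bounded noise to the $B_i$, whereas with $p$ present the short ``solution'' vector no longer has vanishing first coordinate and the forcing step breaks down.
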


(The use of $n_0$ mirrors the use in~\cite{LO}
immediately following their Theorem 3.5.)

\section{Removing the large constant}
Here we show that the large constant can be removed without
changing the correctness of the method, and without changing
the proof too much.

Without the large constant,
similarly to~\cite{LO}, the lattice $L$ will be generated by the 
columns of the ($n+1$)-by($n+1$) matrix:
\ba
\begin{pmatrix}
B_0 & -B_1 & -B_2 &... & -B_{n-1} &-B_n\\
0 & 1 & 0& ... & 0 &0\\
0 & 0 & 1& ... & 0 &0\\
... & ... & ...&... & ... &...\\
0 & 0 & 0&... & 1 &0\\
0 & 0 & 0& ... & 0 &1\\
\end{pmatrix}
\ea
with columns $\b_0....\b_n$.

The LLL algorithm is guaranteed to 
find us $ {\x}\in L$, $ {\x}\neq 0$ satisfying
\ba
|| {\x}|| \le  2^{n/2} ||\e||\le  2^{n/2}n^{1/2} =:m
\ea

If $ {\x} = [x_0; x_1; ...x_n]\in L$ then we have
\ba
 {\x} = x'_0\b_0 + x_1\b_1 +...+x_n\b_n
\ea
where
\ba
x_0 = \left( B_0x'_0-\sum_{i=1}^{n}B_ix_i\right)
\ea

Let $A = \{ {\x} \in L, ||\x||\le m, \x \neq k [0;{\e}]\}$
for any $k \in \ZZ$. 
{\bf This is the set that can give rise to algorithmic failure.}

But if $ {\x} \in A$ then 
\ba
|B_0 x'_0|=\left|x_0+\sum_{i=1}^nB_ix_i  \right|\le 
|x_0|+\sum_{i=1}^n B_i|| {\x}||
\ea
Using (\ref{eq:ass}), this gives us
\ba
|x'_0|
&\le& 
\frac{|x_0|}{B_0}+\frac{\sum_{i=1}^n B_i}{B_0}||\x||\\
&\le& 
\frac{|x_0|}{B_0}+2|| {\x}||\\
&\le& 
3m
\ea
Note that we get a $3m$ instead of Frieze's $2m$ but this will not be material, as we shall see below.

So if $A \neq \emptyset$ there exists 
$ {\x} = [x_0; x_1; x_2; ...;x_n] \in \ZZ^{n+1}$ and $y\in\ZZ$ satisfying
\begin{align}|| {\x}|| < m, \;\;\; |y|\le 3m \tag{a} \\
 {\x} \neq k[0; \e] \;\;\; {\rm for\, any\,} k\in\ZZ \tag{b}\\
\sum_{i=1}^nB_ix_i = yB_0 - x_0 \tag{c}
\end{align}

Consider now a fixed $( {\x},y)$ satisfying (a) and (b), we will prove that
\ba
\operatorname{Pr}( {\x},y {\rm \,satisfy\, (c)}) \le 1/B
\ea

To prove this, note that (c) is equivalent to $\sum_{i=1}^n B_i z_i = -x_0$
where $z_i=x_i-ye_i$.
This is simply a non-trivial (due to (b)) inhomogeneous linear equation over the $\{B_1, ..., B_n\}$.

\begin{lemma}
\label{lem:aff}
Let $H$ be a
$d$-dimensional affine subset of $\RR^{n}$. 
The number of points in the discrete cube 
$[1...B]^n$ 
intersected with $H$ is at most $B^d$. 
\end{lemma}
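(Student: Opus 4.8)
The plan is to reduce the count to the obvious bound $B^d$ on the number of integer points in a $d$-dimensional discrete box, by exhibiting an injection from $H \cap [1\ldots B]^n$ into such a box. First I would pass to the linear part: write $H = q + V$, where $V \subseteq \RR^n$ is the $d$-dimensional linear space of directions of $H$. Since $\dim V = d$, there is a subset $S \subseteq \{1,\ldots,n\}$ with $|S| = d$ such that the coordinate projection $\pi_S \colon \RR^n \to \RR^S$ is injective on $V$. Concretely, take any matrix whose rows span $V$; it has rank $d$, hence possesses $d$ linearly independent columns, and the indices of those columns are the desired set $S$.

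Next I would observe that $\pi_S$ is in fact injective on the whole affine translate $H = q + V$: if $\pi_S(q + v) = \pi_S(q + v')$ with $v, v' \in V$, then $\pi_S(v) = \pi_S(v')$, so $v = v'$ by injectivity on $V$. Restricting further to $H \cap [1\ldots B]^n$, the map $\pi_S$ is therefore injective and takes values in $\{1,\ldots,B\}^S$, a set of cardinality $B^d$. Hence $|H \cap [1\ldots B]^n| \le B^d$, as claimed.

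An alternative route, which may read more cleanly, is induction on $d$. The base case $d = 0$ is a single point, contributing at most $1 = B^0$. For $d \ge 1$, some coordinate function $x_j$ is non-constant on $H$ (otherwise $H$ would be a single point); for each of the at most $B$ integer values $v$ that $x_j$ can take on a point of the box, the slice $H \cap \{x_j = v\}$ is either empty or an affine subspace of dimension $d-1$, so by the inductive hypothesis it meets $[1\ldots B]^n$ in at most $B^{d-1}$ points. Summing gives at most $B \cdot B^{d-1} = B^d$.

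There is no genuine obstacle here: the entire content is the elementary linear-algebra fact that a $d$-dimensional affine subspace projects injectively onto a suitable set of $d$ coordinates, together with the trivial observation that this injectivity is unaffected by the affine translation defining $H$. I would present the projection argument as the main proof and perhaps note the inductive variant in passing.
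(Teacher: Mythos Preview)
Your projection argument is correct and is exactly the approach the paper takes: project $H$ injectively onto a suitable $d$-coordinate subspace and bound by $|C^d| = B^d$. You have simply spelled out the choice of coordinates and the injectivity on the affine translate more carefully than the paper does, and your optional inductive variant is a correct alternative the paper does not mention.
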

\begin{proof}
Let $C^n$ be the discrete cube of the statement.
Define $C^d$ similarly.
The projection of $H$ onto the 
the first $d$ coordinates by forgetting the last $n-d$ coordinates 
is  a bijective (if not, pick a different coordinate subspace), and in particular 
injective,
affine map that sends  points in $H\cap C^n$ to 
points $C^d$.
It follows that $|H\cap C^n| \le |C^d| = B^d$.  
\end{proof}

Since we have one equation, we get $d=n-1$ in our application of Lemma~\ref{lem:aff}, 
giving us $\frac{B^{n-1}}{B^n}=1/B$.

\begin{remark}
In Frieze's original method, he gets $x_0=0$, and so his linear equation is guaranteed to be homogeneous. This does not effect
the count of Lemma~\ref{lem:aff}. Frieze states his argument for this step probabilistically, but we prefer the more general linear algebraic
interpretation. Note the Frieze's probabilistic argument could have worked in the inhomogeneous case as well.
\end{remark}

Letting 
$A_1=\{ {\x} \in \ZZ^{n+1} : ||\x||\le m\}$.
and summing over all $A_1$ and $y$, we get a failure probability bound of 
\bna
\label{eq:count}
\frac{(6n+1)|A_1|}{B} \le 
\frac{(6n+1)(2m+1)^{n+1}}{B} \le \frac{2^{n^2/2}2^{O(n\log n)}}{B} \le O(2^{-\epsilon n^2/2})
\ena
We get $(2m+1)^{n+1}$ instead of  $(2m+1)^n$ as in \cite{frieze}, but this is subsumed 
into the $2^{O(n\log n)}$ in the next step.
We expand on the last inequality in the following lemma.

\begin{lemma}\label{lem: failprob}
For fixed $\epsilon > 0$.
For $n$ sufficiently large (depending on $\epsilon$), 
the probability of failure
is at most
$\frac{1}{2^{\epsilon n^2 /2}}$
\end{lemma}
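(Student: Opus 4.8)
The plan is to unwind the chain of inequalities in display~(\ref{eq:count}) and verify that each asymptotic step holds for $n$ large. The statement to prove, Lemma~\ref{lem: failprob}, is essentially that $(6n+1)(2m+1)^{n+1}/B = O(2^{-\epsilon n^2/2})$, so really we are just justifying the final inequality of~(\ref{eq:count}) with an explicit choice of $n_0(\epsilon)$.

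First I would substitute $m = 2^{n/2} n^{1/2}$, so $2m+1 = 2^{n/2+1} n^{1/2} + 1 \le 2^{n/2 + O(\log n)}$, hence $(2m+1)^{n+1} \le 2^{(n+1)(n/2 + O(\log n))} = 2^{n^2/2 + O(n\log n)}$. Multiplying by the $(6n+1)$ prefactor only adds another $O(\log n)$ to the exponent, so the numerator is $2^{n^2/2 + O(n\log n)}$. Next, $B = 2^{cn^2}$ with $c = 1/2 + \epsilon$, so the ratio is $2^{n^2/2 + O(n\log n) - (1/2+\epsilon)n^2} = 2^{-\epsilon n^2 + O(n\log n)}$.

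Then I would observe that $-\epsilon n^2 + O(n\log n) = -\epsilon n^2/2 - (\epsilon n^2/2 - O(n\log n))$, and since $\epsilon n^2 / 2$ dominates any $O(n \log n)$ term, there is an $n_0(\epsilon)$ beyond which $\epsilon n^2/2 - O(n\log n) \ge 0$; for such $n$ the failure probability is at most $2^{-\epsilon n^2/2}$, as claimed. Concretely, if the $O(n\log n)$ term is bounded by $K n \log n$ for $n \ge 1$, it suffices to take $n_0$ so that $\epsilon n_0 / 2 \ge K \log n_0$.

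There is no real obstacle here; the only thing to be slightly careful about is bookkeeping the constants hidden in the $O(n\log n)$ (the "$+1$"s inside the powers, the $(6n+1)$ factor, and the $n^{1/2}$ from $\|\e\| \le n^{1/2}$) so that the absorbed term is genuinely $O(n\log n)$ and not something larger like $O(n^2)$ — in particular one must check the exponent $(n+1)(n/2 + c'\log n)$ expands to $n^2/2$ plus lower-order terms and nothing of order $n^2$ survives except the intended $n^2/2$. Since $c = 1/2 + \epsilon$ leaves a clean $\epsilon n^2$ surplus in the denominator's exponent, this surplus swamps the $O(n\log n)$ slack and the bound follows.
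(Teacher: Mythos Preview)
Your proof is correct and follows essentially the same route as the paper's: both arguments bound the numerator by $2^{n^2/2 + O(n\log n)}$, note that the $O(n\log n)$ term is dominated by $(\epsilon/2)n^2$ for large $n$, and then divide by $B = 2^{(1/2+\epsilon)n^2}$ to get the $2^{-\epsilon n^2/2}$ bound. Your write-up is a bit more explicit about where the $O(n\log n)$ comes from and about the choice of $n_0$, but the substance is identical.
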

\begin{proof}
Recall 
$B=2^{cn^2}$
where  $c=1/2+\epsilon$ for $\epsilon>0$.
For sufficiently large $n$, the quantity $2^{n^2/2}2^{O(n\log n)}$
is bounded by $2^{(\frac{1}{2} + \frac{\epsilon}{2})n^2}$.  The 
probability of the event in the statement 
is then at 
most 
\[
    \frac{2^{(\frac{1}{2} + \frac{\epsilon}{2})n^2}}{2^{(\frac{1}{2}+\epsilon)n^2}} = \frac{1}{2^{\epsilon n^2 /2}}
\]
\end{proof}

This Lemma then establishes the result of Theorem~\ref{thm:main}.

In summary, Frieze's argument goes through without the use of the large constant. 
The bound on $y$ becomes slightly worse, as does a term in the count of Equation~\ref{eq:count}, 
but all of this is swallowed up by the $2^{O(n\log n)}$ term. Perhaps the biggest difference is that
his original homogeneous linear constraint becomes inhomogeneous, but both cases are covered by
Lemma~\ref{lem:aff}.

\section{Removing the flip}

Here we show that the assumption of Equation~\ref{eq:ass} can be removed if we add one more row to
our lattice generating matrix.

We will leave the big constant out.

So now our lattice $L$ will be generated by the 
columns of the ($n+2$)-by-($n+1$) matrix:
\ba
\begin{pmatrix}
B_0 & -B_1 & -B_2 &... & -B_{n-1} &-B_n\\
1 & 0 & 0& ... & 0 &0\\
0 & 1 & 0& ... & 0 &0\\
0 & 0 & 1& ... & 0 &0\\
... & ... & ...&... & ... &...\\
0 & 0 & 0&... & 1 &0\\
0 & 0 & 0& ... & 0 &1\\
\end{pmatrix}
\ea
Note the new second row.

The LLL algorithm will find us $ {\x}\in L$, $ {\x}\neq 0$ satisfying
\ba
|| {\x}|| \le  2^{(n+1)/2} ||[1;\e]||\le  2^{(n+1)/2}(n+1)^{1/2} 
\le   2\cdot 2^{n/2}n^{1/2}    =:m
\ea
This $m$ is slightly larger than that of the previous section, but not
in a way that will prove material.

If $ {\x} = [x_0; x_0'; x_1; ...x_n]\in L$ then we have
\ba
 {\x} = x'_0\b_0 + x_1\b_1 +...+x_n\b_n
\ea
where
\ba
x_0 = \left( B_0x'_0-\sum_{i=1}^{n}B_ix_i\right)
\ea

The main idea here is that the $x'_0$ data will show up in the lattice vector.
We will bounding the size of the lattice vectors, so we will not need any
extra bounding for $x'_0$, hence no need for a flip.

Let $A = \{ {\x} \in L, ||\x||\le m, \x \neq k [0;1;{\e}]\}$
for any $k \in \ZZ$.

So if $A \neq \emptyset$ there exists 
$ {\x} = [x_0; x'_0; x_1; x_2; ...;x_n] \in \ZZ^{n+2}$ satisfying
\begin{align} 
||\x|| < m,  \tag{a$'$}\\
\x \neq k[0; 1; \e] \;\;\; {\rm for\, any\,} k\in\ZZ \tag{b$'$}\\
\sum_{i=1}^nB_ix_i = x'_0 B_0  - x_0 \tag{c$'$}
\end{align}

Consider now a fixed $\x$ satisfying (a$'$) and (b$'$), we prove,
as above, that
\ba
\operatorname{Pr}(\x{\rm \,satisfy\, (c')}) \le 1/B
\ea

To prove this, note that (c$'$) is equivalent to 
$\sum_{i=1}^n B_i z_i = -x_0$
where $z_i=x_i-x'_0e_i$.
Again, this is an inhomogeneous linear equation
over $\{B_1...B_n\}$.

Letting 
$A_1=\{ {\x} \in \ZZ^{n+2} : ||\x||\le m\}$.
and summing over all $A_1$, we get 
\ba
\frac{|A_1|}{B} \le \frac{(2m+1)^{n+2}}{B} 
\le \frac{2^{n^2/2}2^{O(n\log n)}}{B} \le O(2^{-\epsilon n^2/2})
\ea
Note that we get an $n+2$ exponent in the second term instead of Frieze's $n$, but this is not
material for the third term.

From what we have gleamed from~\cite{LO}
(which has its own version of flipping),
we suspect that the test-and-flip step can
be omitted, and no extra row needs to be added, 
without impacting the success of the algorithm.
(This is also consistent with our experiments, below.)
But proving this might 
require going back to the proof methods
of~\cite{LO}, which are more involved.

\section{Adding Noise}

Let $\eps$ be a $\{-1,0,1\}^n$ be a fixed noise vector. 
And suppose that instead of the correct $B_i$, 
we are given $\{B_0, B_1+\eps_1, ... B_2+\eps_n\}$. ($B_0$ itself is given 
without noise.)
We will see that whp, we can still solve the underlying
subset sum problem.

Our lattice will now be generated by the 
columns of the matrix
\ba
M:=
\begin{pmatrix}
B_0 & -B_1-\eps_1 & -B_2-\eps_2 &... & -B_{n-1}-\eps_{n-1} &-B_n -\eps_n\\
1 & 0 & 0& ... & 0 &0\\
0 & 1 & 0& ... & 0 &0\\
0 & 0 & 1& ... & 0 &0\\
... & ... & ...&... & ... &...\\
0 & 0 & 0&... & 1 &0\\
0 & 0 & 0& ... & 0 &1\\
\end{pmatrix}
\ea
Consider the vector 
$\f := M[1;\e]$ 
in the lattice.
We have $\f = [f_0;1;\e]$,
Because $\e$ solved the subset sum problem, 
we have an error term
$|f_0| \le n$ and thus 
$||\f|| \le (2n+1)^{1/2}$.

The LLL algorithm will find us $ {\x}\in L$, $ {\x}\neq 0$ satisfying
\ba
|| {\x}|| \le  2^{(n+1)/2} (2n+1)^{1/2}
\le   3\cdot 2^{n/2}n^{1/2}    =:m
\ea

If $ {\x} = [x_0; x_0'; x_1; ...x_n]\in L$ then we have
\ba
 {\x} = x'_0\b_0 + x_1\b_1 +...+x_n\b_n
\ea
where
\ba
x_0 = \left( B_0x'_0-\sum_{i=1}^{n}B_ix_i 
-\sum_{i=1}^n \eps_i x_i \right)
\ea

Let $A = \{ {\x} \in L, ||\x||\le m, \x \neq k 
[f_0;1;{\e}]\}$
for any $k \in \ZZ$.

So if $A \neq \emptyset$ there exists 
$ {\x} = [x_0; x'_0; x_1; x_2; ...;x_n] \in \ZZ^{n+2}$ satisfying
\begin{align}
    ||\x|| < m,  \tag{a$''$}\\
    \x \neq k[f_0; 1; \e] \;\;\; {\rm for\, any\,} k\in\ZZ \tag{b$''$}\\
    \sum_{i=1}^nB_ix_i = x'_0 B_0  - x_0 -\sum_{i=1}^n \eps_i x_i \tag{c$''$}
\end{align}

Consider now a fixed $\x$ satisfying (a$''$) and (b$''$).  We prove,
as above, that
\ba
\operatorname{Pr}(\x{\rm \,satisfies\, (c'')}) \le 1/B
\ea

To prove this, note that (c$''$) is equivalent to 
$\sum_{i=1}^n B_i z_i = -x_0  - E$
where $z_i=x_i-x'_0e_i$
and $E=\sum_{i=1}^n \eps_i x_i$
This is simply a non-trivial (due to (b$''$)) inhomogeneous linear equation over the $\{B_1, ..., B_n\}$.

Letting 
$A_1=\{ {\x} \in \ZZ^{n+2} : ||\x||\le m\}$.
and summing over all $A_1$, and all possible 
error vectors
we get 
\ba
\frac{3^n|A_1|}{B} \le \frac{3^n(2m+1)^{n+2}}{B} 
\le \frac{2^{n^2/2}2^{O(n\log n)}}{B} \le O(2^{-\epsilon n^2/2})
\ea
Again, our second term is larger than that of Frieze, but not materially so.

In this section on input errors we have also used the lattice where an extra second row was added to the lattice generating matrix
as in the previous section.
This second row could have been omitted as long we applied Frieze's test-and-flip step. The correctness of this
is left as an exercise.

\section{Experiments}

We implemented methods described above and experimented with random inputs.
We investigated a number of different variants:
The original Frieze method with test-to-flip, 
the original Frieze method with no test-to-flip
and our method with the introduced second row.
For all of these three methods, we tried with, and without, the 
introduction 
of the  large constant. In total this gave us $6$ methods to work with.
We have found that when using the theoretically 
prescribed input magnitude size, $B = 2^{n^2 / 2}$,
all $6$ methods work without any detected failures. 
Indeed, all $6$ methods 
continued to work  when using much smaller values for
$B$.

To push the methods,
we kept dropping $B$ until we came near a phase transition, where failures
began to appear. We found that, at this point, 
the $6$ different methods all failed
randomly, but there was no discernible difference in their
success rates.

We then experimented with the introduction of $\{-1,0,1\}$ noise
in the input. Here, we found that all three methods worked as 
expected as long as no large constant was used. When the large constant was
introduced, all three methods failed consistently.

\def\v{\oldv}


\end{document}